\definecolor{darkgray}{rgb}{0.66, 0.66, 0.66}
\definecolor{darkblue}{HTML}{4169E1}
\newtheorem{theorem}{Theorem}%
\newtheorem{corollary}[theorem]{Corollary}%
\newtheorem{lemma}[theorem]{Lemma}%
\newtheorem{definition}[theorem]{Definition}%
\acrodef{ACDIS}[ACDIS]{Adaptive Communication Decision and Information Systems}
\acrodef{AEP}{asymptotic equipartition property}
\acrodef{AoA}{Angle of Arrival}
\acrodef{AWGN}{Additive White Gaussian Noise}
\acrodef{AVC}[AVC]{Arbitrarily Varying Channel}
\acrodef{BER}{Bit-Error-Rate}
\acrodef{BEC}{Binary Erasure Channel}
\acrodef{BPSK}{Binary Phase-Shift Keying}
\acrodef{BSC}{Binary Symmetric Channel}
\acrodef{BICM}[BICM]{Bit-Interleaved Coded-Modulation}
\acrodef{BP}[BP]{Basis Pursuit}
\acrodef{CDF}[CDF]{Cumulative Distribution Function}
\acrodef{CGF}[CGF]{Cumulant Generating Function}
\acrodef{CLT}[CLT]{Central Limit Theorem}
\acrodef{CSI}[CSI]{Channel State Information}
\acrodef{CQ}[CQ]{Classical-Quantum}
\acrodef{DMC}[DMC]{Discrete Memoryless Channel}
\acrodef{DMS}[DMS]{Discrete Memoryless Source}
\acrodef{ERM}[ERM]{Empirical Risk Minimization}
\acrodef{EPR}[EPR]{Einstein-Podolsky-Rosen}
\acrodef{FER}[FER]{Frame Error Rate}
\acrodef{ICA}[ICA]{Independent Component Analysis}
\acrodef{iid}[i.i.d.]{independent and identically distributed}
\acrodef{IoT}[IoT]{Internet of Things}
\acrodef{JCS}[JCS]{Joint Communication and Sensing}
\acrodef{KKT}[KKT]{Karush-Kuhn Tucker}
\acrodef{LASSO}[LASSO]{Least Absolute Shrinkage and Selection Operator}
\acrodef{LP}[LP]{Linear Program}
\acrodef{LPD}[LPD]{Low Probability of Detection}
\acrodef{LDPC}[LDPC]{Low-Density Parity-Check}
\acrodef{LLMS}[LLMS]{Linear Least Mean Square}
\acrodef{LMS}[LMS]{Least Mean Square}
\acrodef{MAC}[MAC]{multiple-access channel}
\acrodef{MGF}[MGF]{Moment Generating Function}
\acrodef{MLC}[MLC]{Multi-Level Coding}
\acrodef{MLE}[MLE]{Maximum Likelihood Estimate}
\acrodef{MIMO}[MIMO]{Multiple-Input Multiple-Output}
\acrodef{MISO}{Multiple-Input Single-Output}
\acrodef{MSD}[MSD]{Multi-Stage Decoding}
\acrodef{MMSE}[MMSE]{Minimum Mean-Square Error}
\acrodef{OTP}[OTP]{One-time Pad}
\acrodef{PAC}[PAC]{Probably Approximately Correct}
\acrodef{PCA}[PCA]{Principal Component Analysis}
\acrodef{PDF}[PDF]{Probability Density Function}
\acrodef{PMF}[PMF]{Probability Mass Function}
\acrodef{POVM}[POVM]{Positive Operator-Valued Measure}
\acrodef{PPM}[PPM]{Pulse Position Modulation}
\acrodef{PSD}{Power Spectral Density}
\acrodef{PSK}{Phase Shift Keying}
\acrodef{QKD}{Quantum Key Distribution}
\acrodef{QMF}[QMF]{Quantum-Memory-Free}
\acrodef{QSDC}{Quantum Secure Direct Communication}
\acrodef{ROC}{Receiver Operating Characteristic}
\acrodef{CVQKD}{Continuous-Variable \ac{QKD}}
\acrodef{QPSK}{Quadrature Phase-Shift Keying}
\acrodef{RV}{random variable}
\acrodef{SIMO}{Single-Input Multiple-Output}
\acrodef{SNR}{Signal-to-Noise Ratio}
\acrodef{SVM}[SVM]{Support Vector Machine}
\acrodef{TPCP}{Trace-Preserving Completely-Positive}
\acrodef{wrt}[w.r.t.]{with respect to}
\acrodef{WSS}{Wide Sense Stationary}
\acrodef{i.i.d.}{independent and identically distributed}
\acrodef{FDG}{functional dependence graph}
\acrodef{PBS}{polarization beamsplitter}
\acrodef{PD}{photodetector}
\acrodef{CPTP}{completely positive trace-preserving}
\newcommand{\calB}{\mathcal{B}}
\newcommand{\calD}{\mathcal{D}}
\newcommand{\calE}{\mathcal{E}}
\newcommand{\calH}{\mathcal{H}}
\newcommand{\calL}{\mathcal{L}}
\newcommand{\calN}{\mathcal{N}}
\newcommand{\calW}{\mathcal{W}}
\newcommand{\norm}[2][]{{\left\Vert{#2}\right\Vert}_{#1}}
\newcommand{\argmax}{\mathop{\text{argmax}}}
\renewcommand{\P}[2][]{{\mathbb{P}_{#1}\left(#2\right)}}
\newcommand{\bra}[1]{{\left\langle{#1}\right\vert}}
\newcommand{\ket}[1]{{\left\vert{#1}\right\rangle}}
\newcommand{\D}[2]{{\mathbb{D}}\!\left(#1\,\middle\Vert\,#2\right)} 
\newcommand{\trD}[3][\alpha]{{\widetilde{\mathbb{D}}_{#1}}\!\left(#2\,\middle\Vert\,#3\right)} 
\newcommand{\V}[2]{{\mathbb{V}}\!\left(#1\,\middle\Vert\,#2\right)} 
\newcommand{\I}[2]{{{\mathbb{I}}\!\left(#1;#2\right)}} 
\newcommand{\vonH}[1]{{\mathbb{H}}\!\left(#1\right)}
\newcommand{\rH}[2][\alpha]{\mathbb{H}_{#1}\left(#2\right)} 
\newcommand{\smH}[2][\epsilon]{\mathbb{H}_{\text{\textnormal{min}}}^{#1}\left(#2\right)} 
\newcommand{\sMH}[2][\epsilon]{\mathbb{H}_{\text{\textnormal{max}}}^{#1}\left(#2\right)} 
\newcommand{\avgI}[1]{{{\mathbb{I}}\!\left(\smash{#1}\right)}} 
\renewcommand{\leq}{\leqslant} 
\renewcommand{\geq}{\geqslant} 
\newcommand{\tr}[2][]{\ensuremath{\text{\textnormal{tr}}_{#1}\left(#2\right)}}  
\newcommand{\Pdist}[1]{\ensuremath{P}\left(#1\right)} 
\newcommand{\ball}[2][]{\calB^{#1}\left(#2\right)} 
\newcommand{\cmnt}[1]{\textcolor{darkblue}{#1}}
\renewcommand{\cmnt}[1]{#1}
\begin{document}
\title{A \acl{QMF} \acl{QSDC} Protocol Based on Privacy Amplification of Coded Sequences}

\author{
    \IEEEauthorblockN{
        Shang-Jen Su        \IEEEauthorrefmark{1},
        Shi-Yuan Wang\thanks{\cmnt{S.-Y. Wang contributed to this work while pursuing the Ph.D. degree at Georgia Institute of Technology.}}       \IEEEauthorrefmark{2}
        Matthieu R. Bloch   \IEEEauthorrefmark{1}}
    \IEEEauthorblockA{
        \IEEEauthorrefmark{1}
        	School of Electrical and Computer Engineering, Georgia Institute of Technology, 
        	Atlanta, GA 30332, USA}
        \IEEEauthorblockA{\IEEEauthorrefmark{2}
      		Qualcomm Technologies Inc.,
      		San Diego, CA 92121, USA}
    \IEEEauthorblockA{
    Email: ssu49@gatech.edu,
    shiyuanw@qti.qualcomm.com,
    matthieu.bloch@ece.gatech.edu}
}

\addtolength{\topmargin}{+0.07in}

\maketitle

\begin{abstract}
We develop an information-theoretic analysis of \ac{QMF} \ac{QSDC} under collective attacks as an alternative to the use of a conventional \ac{QKD} protocol in conjunction with one-time pads. Our main contributions are:
  \begin{inparaenum}[1)]
  \item a \ac{QMF}-\ac{QSDC} protocol that only relies on universal hashing of coded sequences without wiretap coding;
  \item a set of privacy amplification theorems for extracting secrecy from \emph{coded} classical sequences against quantum side-information.
  \end{inparaenum}
These tools open the way to the design of effective \ac{QMF}-\ac{QSDC} protocols.
\end{abstract}

\section{Introduction}
\label{sec:introduction}

The theory and practice of \acf{QKD}~\cite{Bennett1984QuantumCryptography,Bennett1992ExperimentalQuantum} have made tremendous progress. Modern proofs now account for finite-length effects in the operation of the protocols and offer measurement-device-independent security guarantees~\cite{Scarani2009SecurityPractical, Pirandola2020AdvancesQuantum, Xu2020SecureQuantum}. Record distances and rates have also been achieved, paving the way for quantum-secure communications at scale~\cite{Liao2018SatelliterelayedIntercontinental,Chen2021IntegratedSpacetoground}. Unfortunately, standard \ac{QKD} protocols often remain inefficient:
\begin{inparaenum}[1)]
\item protocols inefficiently use quantum resources, especially when requiring a sifting phase to reconcile preparation and measurement bases;
\item protocols focus on the key generation and defer the task of securing a transmission to subsequent protocols, typically by encrypting data using one-time pads.
\end{inparaenum}

This state of affairs has prompted two major developments. First is the investigation of more efficient \ac{QKD} protocols, exemplified by~\cite{Long2002TheoreticallyEfficient}, which rely on entanglement and quantum memories to more efficiently consume \ac{EPR} pairs than other \ac{QKD} protocols such as~\cite{Ekert1991QuantumCryptography}.
Second is the study of \acf{QSDC} protocols, exemplified \cmnt{by~\cite{Qi2019ImplementationSecurity,Wu2019SecurityQuantum,Deng2004SecureDirect,Ying2024PassiveDecoystate,Paparelle2025ExperimentalDirect}}, which attempt to secure the reliable communication of messages without explicitly relying on a secret-key generation phase. \ac{QSDC} is motivated by the vast literature on wiretap coding in classical and quantum information-theoretic secrecy~\cite{Wyner1975WiretapChannel,Csiszar1978BroadcastChannels,Devetak2005PrivateClassical,Hayashi2015QuantumWiretap}, which shows that channels can be effectively used to \emph{jointly} ensure reliability and secrecy, instead of separately~\cite{Ahlswede1993CommonRandomness,Maurer1993SecretKey}. Recent advances in~\ac{QSDC} include proposals for~\ac{QMF}-\ac{QSDC} protocols that circumvent the current technological bottlenecks of quantum memories~\cite{Sun2018DesignImplementation,Sun2020PracticalQuantum}.

Assessing the full benefits of \ac{QSDC} over \ac{QKD} requires careful inspection. Specifically, the secure rate of~\ac{QKD} and~\ac{QSDC} can only be assessed \emph{after} a channel estimation phase. Since~\ac{QKD} solely focuses on generating random secret keys that need not be known beforehand, post-processing in the form of reconciliation~\cite{Brassard1993SecretkeyReconciliation,Martinez-Mateo2013KeyReconciliation} and privacy amplification~\cite{Bennett1995GeneralizedPrivacy} is performed \emph{a posteriori} once the channel quality is known. In particular, reconciliation and privacy amplification can be made universal.\footnote{Strictly speaking, only privacy amplification is universal; however, in practice, reconciliation algorithms are flexible enough to adapt to varying channel conditions~\cite{Elkouss2009EfficientReconciliation}.} For~\ac{QSDC} with quantum memories, the joint encoding for reliability and secrecy with a wiretap code can, in principle, be deferred to after the estimation phase. However, with the exception of~\cite{Bellare2012SemanticSecurity,Chou2022ExplicitWiretap,Wu2022QuantumSecure}, many wiretap code constructions are not universal~\cite{Thangaraj2007ApplicationsLDPC,Mahdavifar2011AchievingSecrecy,Rathi2013PerformanceAnalysis,Chou2016PolarCoding}. The situation is even more challenging for~\ac{QMF}-\ac{QSDC} since the joint encoding for reliability and secrecy should not only be universal but also happen \emph{a priori} before channel estimation. This situation is similar to the one encountered in wiretap coding over channels with uncertainty, such as wireless channels, for which solutions have been proposed~\cite{Bloch2008WirelessInformationtheoretic,Gungor2013SecrecyOutage,Tahmasbi2020LearningAdversarys}. In brief, the main idea is to avoid leakage by provisioning secret keys ahead of time and systematically one-time padding messages. Secret keys are re-generated \emph{a posteriori}, once channel conditions allow, and added to a pool for later consumption; this operation requires key generation from coded sequences, which is more challenging than from uncoded (\ac{i.i.d.}) ones.

The main contributions of the present work are:
\begin{inparaenum}[1)]
	\item	the proposal of a \ac{QMF}-\ac{QSDC} protocol that does not rely on wiretap coding, and
	\item	an approach that employs a code for reliability and universally extracts secret keys from codewords, which are coded sequences.
\end{inparaenum}

The remainder of the paper is organized as follows. Section~\ref{sec:notation} introduces the notation. We present our system model and protocol in Section~\ref{sec:system-and-protocol}. And we analyze the extracted key length of the protocol in Section~\ref{sec:key-rate-analysis}. Additional details can be found in the arXiv preprint~\cite{Su2026QuantumMemoryFreeQuantum}.

\section{Notation}
\label{sec:notation}
	 Let $\calH_A$ denote the finite-dimensional Hilbert space corresponding to system $A$.
	 Quantum states in $\calH_A$ are represented as normalized density operators in $\calD_{\circ}(\calH_A)\triangleq\{\rho_A : \rho_A \succeq 0 \; \wedge \tr{\rho_A}=1 \}$.
	 Subnormalized quantum states in $\calH_A$ are represented as subnormalized density operators in $\calD_{\bullet}(\calH_A)\triangleq\{\rho_A : \rho_A \succeq 0 \; \wedge \tr{\rho_A} \leq 1 \}$.
	 For $\rho,\sigma \in \calD_{\bullet}(\calH_A)$, the generalized trace distance between $\rho$ and $\sigma$ is denoted by $\frac{1}{2}\norm[1]{\rho-\sigma}+\frac{1}{2}|\tr{\rho-\sigma}|$, where $\norm[1]{\sigma}\triangleq\tr{|\sigma|}$ and $|\sigma|=\sqrt{\sigma^\dagger\sigma}$.
	 The generalized fidelity between $\rho$ and $\sigma$ is defined as $F(\rho, \sigma) \triangleq\left(\|\sqrt{\rho} \sqrt{\sigma}\|_1+\sqrt{(1-\operatorname{tr} \rho)(1-\operatorname{tr} \sigma)}\right)^2$.
	 The purified distance between $\rho$ and $\sigma$ is defined as $\Pdist{\rho,\sigma}\triangleq\sqrt{1-F(\rho, \sigma)}$.
	 Let $\ball[\epsilon]{\rho}\triangleq\{\xi \succeq 0:\Pdist{\rho,\xi}\leq\epsilon \textnormal{ and } \tr{\xi}\leq 1\}$.
	 Let $\alpha\in(0,1)\cup (1,\infty)$, and the minimal quantum Renyi divergence between two subnormalised states $\rho$ and $\sigma$ is defined~\cite{Tomamichel2016QuantumInformation} as $\trD{\rho}{\sigma} \triangleq \frac{1}{\alpha-1}\log\frac{\|\sigma^{\frac{1-\alpha}{2 \alpha}}\rho\sigma^{\frac{1-\alpha}{2 \alpha}}\|^\alpha_\alpha}{\tr{\rho}}$ if $(\alpha<1 \wedge \rho \not\perp \sigma) \vee \rho \ll \sigma$.
	 We denote $\lim_{\alpha \rightarrow 1}\trD{\rho}{\sigma}$ by $ \D{\rho}{\sigma}$ and $\frac{d}{d\alpha}\trD{\rho}{\sigma}|_{\alpha=1}$ by $ \frac{1}{2\log e}\V{\rho}{\sigma}$.
	 For $\alpha \geq 0$ and $\rho_{AB} \in \mathcal{D}_\circ(AB)$, we define $\widetilde{\mathbb{H}}_\alpha^{\uparrow}(A|B)_{\rho}=\sup_{\sigma_B\in\mathcal{D}_\circ(B)}-\trD{\rho_{AB}}{I_A\otimes\sigma_B}$.
	 $\widetilde{\mathbb{H}}_\alpha^{\uparrow}(A)_{\rho}=-\trD{\rho_{A}}{I_A}$ and $\vonH{A}_{\rho}=\lim_{\alpha \rightarrow 1}\widetilde{\mathbb{H}}_\alpha^{\uparrow}(A)_{\rho}$
	 The quantum mutual information with a state $\rho_{AB}$ is defined as $\avgI{A;B}_{\rho}\triangleq \vonH{A}_{\rho} + \vonH{B}_{\rho} - \vonH{AB}_{\rho}$.
	 The min-entropy of $A$ conditioned on $B$ of the subnormalized state $\rho_{AB}$ is defined as $\smH[]{A|B}_\rho\triangleq\sup_{\sigma_B\in\mathcal{D}_\bullet(B)}\sup\{\lambda \in \mathbb{R}: \rho_{AB}\leq \exp(-\lambda)I_A\otimes\sigma_B \}$.
	 The smooth min-entropy of $\rho_{AB}$ is defined as $\smH{A|B}_\rho\triangleq\max_{\tilde{\rho}_{AB}\in\ball[\epsilon]{\rho_{AB}}}\smH[]{A|B}_{\tilde{\rho}}$.
	 And $\sMH[]{A|B}_\rho\triangleq\max_{\sigma_B\in\mathcal{D}_\bullet(B)}\log F(\rho_{AB},I_A \otimes\sigma_B )$.
	 The Holevo information of a quantum channel $\mathcal{N}$ from system $A$ to system $B$ is defined as $\chi{(\mathcal{N}_{A \rightarrow B})} \triangleq {\sup}_{\rho_{XA}} \I XB$, where $\rho_{XA}$ is a CQ-state.
	 The Holevo information of an emsemble is defined as $\chi{\{p(x),\sum_x p(x) \rho_x\}} \triangleq \vonH{\sum_x p(x) \rho_x} - \sum_x p(x)\vonH{ \rho_x}$.
	 For $a,b \in \mathbb{N}$ such that $a \leq b$, we define $[a;b] \triangleq \{a,a+1,\dots,b-1,b\}$.

\section{System Model and Protocol}
\label{sec:system-and-protocol}
\subsection{System Model}
\label{sec:system}
We consider a general \ac{QSDC} model, in which the legitimate parties, Alice and Bob, communicate reliably and securely through a roundtrip quantum channel with the assistance of a public and authenticated classical channel. The roundtrip communication consists of three parts: \begin{inparaenum}[1)] \item Bob prepares and sends random qubits to Alice via the forward channel, described as a \cmnt{\ac{CPTP}} map $\calL_{B \to A}$ from Bob's system $B$ to Alice's system $A$; \item Alice and Bob estimate the forward channel. If the channel permits, Alice then securely encodes a message $w \in \calW$ with a set of encoders; \item The sequence of encoded qubits is sent to Bob and tapped by Eve via the backward channel, $\calL_{X \to Y Z}$, mapping from system $X$ to Bob's system $Y$ and Eve's system $Z$. Bob estimates the message with a set of decoders while Eve creates information leakage by \emph{collective} attacks. \end{inparaenum} The metrics for reliability and security of a \ac{QSDC} model are defined as follows.

\begin{definition}
  \label{def:protocol_reliable_secure}
  \sloppy A \ac{QSDC} model consists of a set of encoders, $\{\calE^{(w)}_{A \to X}\}_{(w)}$, a set of decoders, $\{\Pi_Y^{(w)}\}_{(w)}$, a forward channel $\calL_{B \to A}$, a backward channel $\calL_{X \to YZ}$, and a public, authenticated classical channel. Let $\rho_{W Y Z C}$ denote the joint state comprising the classical message $W$, Bob's received system $Y$, Eve's received system $Z$, and public information $C$. The model is said to be $\epsilon_\textnormal{R}$-reliable and $\epsilon_\textnormal{S}$-secure if $\P{W \neq \hat{W}} \leq \epsilon_\textnormal{R}$ and $\frac{1}{2}\norm[1]{ \rho_{WZC} - \rho_W \otimes \rho_{ZC} } \leq \epsilon_\textnormal{S}$, where $W$ and $\hat{W}=\argmax_{w \in \calW} \tr{\rho_Y \Pi_Y^{(w)}}$ denote the transmitted and estimated messages, respectively.
\end{definition}

The focus of the present paper is on designing tools to analyze the extractable key length across blocks. Since the channel varies across blocks, there may be no closed-form expression for the capacity, and characterizing the corresponding capacity is not our interest.

\subsection{Proposed Protocol}
\label{sec:protocol}
\begin{figure}[ht]
    \centering
    \includegraphics[width=1\linewidth]{./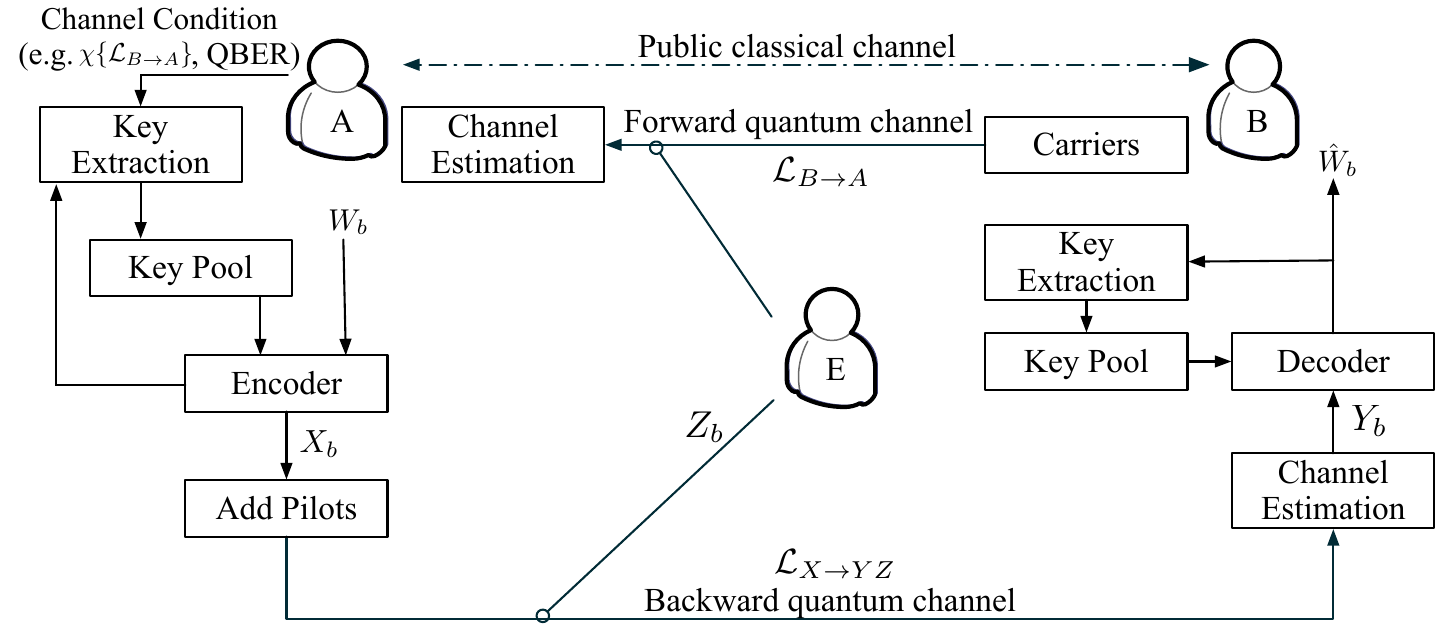}
    \caption{The proposed \ac{QMF}-\ac{QSDC} \cmnt{protocol} in block $b$.}
    \label{fig:block_diag}
\end{figure}

The proposed \ac{QMF}-\ac{QSDC} protocol, adapted from~\cite{Sun2020PracticalQuantum} and illustrated in Fig.~\ref{fig:block_diag}, enables Alice to transmit one message to Bob in a block while generating a shared key and storing it in a key pool for securing messages in subsequent blocks. In each block, Alice applies one-time pads to messages using keys from the key pool, then encodes messages into codewords using a publicly known codebook, and transmits codewords to Bob along with pilot qubits for channel estimation. After Bob estimates the channel and performs error correction on the received qubits, both Alice and Bob apply privacy amplification to extract a key from the coded sequence, where the extractable key length depends on the channel conditions. Unlike two-way \ac{QKD} protocols (e.g.,~\cite{Beaudry2013SecurityTwoway}), our protocol transmits coded sequences rather than random ones. The steps of the proposed protocol are as follows:

\begin{enumerate}
	\item In block $b$, Bob prepares a sequence of $n+p$ qubits, denoted by $B^{n+p}$. Each qubit in the sequence is randomly selected from the set of $\{ \ket{0}, \ket{1}, \ket{+},\ket{-} \}$ with equal probabilities, where $\ket{0}$ and $\ket{1}$ form the computational basis, $\ket{+}= \frac{1}{\sqrt{2}} ( \ket{0} +\ket{1})$, and $\ket{-}= \frac{1}{\sqrt{2}} ( \ket{0} -\ket{1})$. Bob sends the sequence to Alice via the forward channel $\otimes_{i=1}^{n+p} \calL_{B \to A}$.
	\item Eve is allowed to implement a collective attack in each block. She may attach $\tilde{n}$ ancilla systems, denoted by $E^{\tilde{n}}$, to some qubits sent by Bob in Step 1, apply a joint unitary, and resend the qubits $\tr[E]{U \rho_{BE}^{\otimes \tilde{n} } U^\dagger}$ through the channel. Eve attempts to gain information about Alice's transmission by collectively measuring the ancilla qubits afterward.
	\item When Alice receives the sequence of qubits, denoted by $A^{n+p}_b$, she randomly measures $p$ of them in either the $Z$-basis or the $X$-basis to estimate the forward channel, with the aid of the public authenticated classical channel. The remaining $n$ qubits are reserved for message encoding in the next step. Based on the forward channel estimation, Alice and Bob may either abort the protocol or request a retransmission.
	\item Based on the forward and backward channel estimations from previous blocks, Alice and Bob determine the coding rate for the current block $b$, denoted by $R_{\textnormal{code},b}$, and prepare a set of encoders, $\{\calE^{(w, k)}_{A^n \to X^n}\}_{(w,k)}$, with corresponding decoders, $\{\Pi_{Y^n}^{(w,k)}\}_{(w,k)}$. Alice applies a one-time pad to the message $W_b \in \calW$ using a key $K_{b-1}^{\textnormal{mix}}$ drawn from a key pool, then encodes the encrypted message, $L_b=W_b\oplus K_{b-1}^{\textnormal{mix}}$, as a classical codeword on the sequence of qubits received in Step 3. Applying the unitary operator, $U_Y \triangleq \ket{0}\bra{1} - \ket{1}\bra{0}$, represents a classical bit ``1'' while applying the identity operator to leave the qubit unchanged represents a classical bit ``0.'' The encoded sequence of $n$ qubits is then sent back to Bob along with $p$ interlaced pilots through the backward channel. The entire sequence is denoted by ${X^{}}^{n+p}_b$.
	\item Eve may obtain information about the message $W_b$ by tapping the backward channel, acquiring $Z_b$, and performing a joint measurement with the ancilla qubits she previously attached in Step~2.
	\item After Bob receives the sequence of qubits ${Y^{}}^{n+p}_b$, Alice reveals the pilots and their positions. Alice and Bob then jointly perform backward channel estimation over the authenticated public channel. If the channel conditions are poor, they request a retransmission and skip the remaining steps of the current block.
	\item After removing the pilots and obtaining $Y^n_b$, Bob decodes the codeword using the set of decoders, $\{\Pi_{Y^n}^{(w, k)}\}_{(w,k)}$. In this process, Bob performs measurements in the same basis as prepared in Step 1, aided by the key.
	\item Alice and Bob extract a fresh key $K_b$ by applying privacy amplification to the codeword. \cmnt{The extractable key length is determined by the code and channel conditions, for example, the Holevo information of the channel.} The key $K_b$ is stored in a key pool for subsequent use.
\end{enumerate}

Through retransmission and coordination over the authenticated channel, the protocol offers flexibility in replenishing the key pool required in Step 4. Unlike the two-way DL04 protocol~\cite{Deng2004SecureDirect}, the \ac{QMF}-\ac{QSDC} protocol does not require quantum memories but operates over a sequence of dependent blocks. Compared to the existing \ac{QMF}-\ac{QSDC} protocol~\cite{Sun2020PracticalQuantum}, the unique attributes of our proposed protocol are:
\begin{inparaenum}[a)]
\item the analysis accounts for the non-\ac{i.i.d.} sequence of qubits resulting from codeword transmission;
\item no wiretap coding is used and single privacy-amplification-based key extraction mechanism is employed.
\end{inparaenum}

\section{Protocol Analysis:\\ Privacy Amplification on Coded Sequences}
\label{sec:key-rate-analysis}

Since the channel is only estimated \emph{after} each block transmission, coding must be universal and compatible with arbitrary channels. Universal reliability is achieved through appropriate retransmissions upon detecting errors. Secrecy must also be universal and our protocol relies on privacy amplification rather than wiretap codes.

\sloppy Our analysis focuses on characterizing a lower bound for the extractable key length when performing privacy amplification. This section focuses on a single block $b$ hence we drop the index $b$ from the notation. According to the quantum leftover hashing lemma~\cite{Tomamichel2011LeftoverHashing}, the extractable key length in a block of the proposed protocol is determined by the smooth min-entropy of the transmitted codeword, conditioned on the information available to the eavesdropper, denoted by $\smH{X^n|Z^n}$. \cmnt{We assume that Bob can reliably recover $X^n$ from $Y^n$ using the code.} The public information $C^n$ depends on pilots but is independent of the messages. The codeword structure makes the sequence $X^n$ non-\ac{i.i.d.}, requiring a universal bound on extractable key length that depends of code-specific characteristics. To address this, we leverage bounds from entropy accumulation and the quantum \ac{AEP}~\cite{Dupuis2020EntropyAccumulation, Marwah2024SmoothMinentropy, Tomamichel2009FullyQuantum}.

\subsection{Extractable Key Length In The Asymptotic Regime}
We begin by establishing a lower bound on the extractable key length over $n$ channel uses in Theorem~\ref{thm:asymptotic_key_length}, considering the asymptotic regime where $n$ is sufficiently large. In this setting, each block consists of $n$ channel uses, divided into $B_\textnormal{sub}$ messages encoded over $m$ channel uses.

\begin{theorem}
\label{thm:asymptotic_key_length}
Let $B_{\textnormal{sub}}, m \in \mathbb{N}^*$ and  $n = B_{\textnormal{sub}} \cdot m$. Let $R_{\textnormal{code}}$ denote the coding rate. Consider a quantum channel $\mathcal{N}_{X \rightarrow Z}$ from system $X$ to system $Z$, and a state $\rho_{XZ}^n \in \mathcal{D}_\circ(\mathcal{H}_{XZ}^{\otimes n})$ representing the joint quantum state of systems $X$ and $Z$ over $n$ channel uses. For a parameter $\epsilon > 0$ and $n \geq \frac{8}{5}\log \frac{2}{\epsilon^2}$, we have
\begin{multline}
		\smH{X^{m^{\otimes B_{\textnormal{sub}}}}|Z^{m^{\otimes B_{\textnormal{sub}}}}}_{\rho^{m^{\otimes B_{\textnormal{sub}}}}}
	\geq
		n 	\left(
				R_\textnormal{code} - \chi(\mathcal{N}_{X \rightarrow Z})
			\right)\\
			- O(n^{\frac{3}{4}}). \nonumber
\end{multline}
\end{theorem}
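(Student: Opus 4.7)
My plan is to combine a sub-block application of the quantum AEP with a codebook-wise Holevo bound on Eve's information, then optimize the sub-block size $m$ to land on the $O(n^{3/4})$ correction. The starting point is the standard identity $\vonH{X^{n}|Z^{n}}=\vonH{X^{n}}-\avgI{X^{n};Z^{n}}_{\rho}$. Uniformity of $X^{n}$ over the codebook of size $2^{nR_{\text{code}}}$ gives $\vonH{X^{n}}\geq nR_{\text{code}}$, and the product structure of the channel, $\rho_{x^{n}}=\bigotimes_{i}\rho_{x_{i}}$, allows the Holevo bound together with sub-additivity of the von Neumann entropy on the averaged output state:
\begin{equation*}
\avgI{X^{n};Z^{n}}\leq\chi\bigl(\{p(x^{n}),\rho_{x^{n}}\}\bigr)\leq\sum_{i=1}^{n}\chi\bigl(\{p_{i}(x_{i}),\rho_{x_{i}}\}\bigr)\leq n\,\chi(\mathcal{N}_{X\to Z}),
\end{equation*}
so that $\vonH{X^{n}|Z^{n}}\geq n\bigl(R_{\text{code}}-\chi(\mathcal{N}_{X\to Z})\bigr)$.

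Second, I would lift this von Neumann bound to the smooth min-entropy. The tensor-product AEP of Tomamichel 2009 does not apply verbatim because the codebook induces correlations along $X^{n}$, so I would invoke the Marwah--Dupuis 2024 smooth-min-entropy AEP on sub-blocks. Partitioning $n=B_{\text{sub}}\cdot m$ and treating each sub-block of $m$ uses as a super-letter of dimension $d^{m}$, this AEP yields a bound of the form
\begin{equation*}
\smH{X^{n}|Z^{n}}_{\rho}\;\geq\;B_{\text{sub}}\,\vonH{X^{m}|Z^{m}}_{\bar{\rho}^{m}}\;-\;c\sqrt{B_{\text{sub}}}\bigl(m\log d+\log(1/\epsilon)\bigr),
\end{equation*}
where $\bar{\rho}^{m}$ is the per-sub-block marginal of $\rho^{n}$. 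The hypothesis $n\geq \frac{8}{5}\log(2/\epsilon^{2})$ in the theorem is exactly the sample-size condition under which this AEP becomes non-trivial. Summing the Holevo bound over sub-blocks and applying it inside the AEP gives $B_{\text{sub}}\vonH{X^{m}|Z^{m}}_{\bar{\rho}^{m}}\geq n(R_{\text{code}}-\chi(\mathcal{N}_{X\to Z}))$.

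The remaining task is a simple optimization. Since $mB_{\text{sub}}=n$, the additive penalty is $O\bigl(m\sqrt{B_{\text{sub}}}\bigr)=O\bigl(\sqrt{nm}\bigr)$. Choosing $m=\sqrt{n}$ (and hence $B_{\text{sub}}=\sqrt{n}$) balances these factors and produces the claimed $O(n^{3/4})$ correction. Dimensional constants and the $\log(1/\epsilon)$ term are absorbed into the big-$O$ as they are $o(n^{3/4})$.

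\textbf{Main obstacle.} The delicate step is the second one: the sub-block marginals $\bar{\rho}^{m}$ are not exactly tensor products because the codebook creates correlations both within and across sub-blocks, so the Marwah--Dupuis AEP cannot be plugged in out of the box. I expect to have to either (i) show that averaging over codewords makes the sub-block marginals close enough to a common reference state that the AEP applies with controlled slack, or (ii) verify an approximate Markov / min-tradeoff hypothesis of an entropy-accumulation-type bound with per-sub-block state updates. Either route requires carefully tracking how the non-i.i.d.\ deviation enters the error, and it is precisely this deviation that inflates the correction from the i.i.d.\ $O(\sqrt{n})$ of the standard AEP to the $O(n^{3/4})$ appearing in the statement.
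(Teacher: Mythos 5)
Your proposal stalls at precisely the step that the paper's proof makes immediate, and the obstacle is self-inflicted by a misreading of the theorem's hypotheses. The statement concerns the state $\rho^{m^{\otimes B_{\textnormal{sub}}}}$: in the asymptotic regime the protocol divides each block of $n$ uses into $B_{\textnormal{sub}}$ \emph{independent} messages, each encoded by its own length-$m$ codeword, so the joint state is by construction an exact tensor power across sub-blocks, with the codeword-induced correlations confined \emph{within} each sub-block. Consequently the fully quantum AEP of~\cite[Theorem 9]{Tomamichel2009FullyQuantum} applies verbatim with each sub-block as one i.i.d.\ super-letter, which is exactly what the paper does:
\begin{align*}
	\smH{X^{m^{\otimes B_{\textnormal{sub}}}}|Z^{m^{\otimes B_{\textnormal{sub}}}}}_{\rho^{m^{\otimes B_{\textnormal{sub}}}}}
	\geq
	B_{\textnormal{sub}} \vonH{X^m|Z^m}_{\rho^m}
	- 4\sqrt{B_{\textnormal{sub}}}\,\log\bigl(2^{mR_\textnormal{code}/2}+2\bigr)\sqrt{\log\tfrac{2}{\epsilon^2}},
\end{align*}
where the additive correction is controlled not by a generic dimension bound $m\log d$ but by $\sMH[]{X^m|Z^m}\leq \sMH[]{X^m}= mR_\textnormal{code}$ (together with non-negativity of the conditional min-entropy for CQ states). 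The per-sub-block von Neumann term is then single-letterized exactly as in your first step, only with $m$ in place of $n$: $\vonH{X^m|Z^m}_{\rho^m}\geq m\bigl(R_\textnormal{code}-\chi(\mathcal{N}_{X\to Z})\bigr)$, and the balance $m=\sqrt{n}$ turns the penalty $O(\sqrt{nm})$ into the claimed $O(n^{3/4})$.

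By contrast, you model $X^n$ as a single uniform codeword of rate $R_\textnormal{code}$ spanning all $n$ uses, which creates correlations across sub-blocks; you then correctly observe that the i.i.d.\ AEP no longer applies, but you never resolve this. Your ``main obstacle'' paragraph offers two speculative repairs (approximating the sub-block marginals, or verifying an approximate-Markov/min-tradeoff hypothesis for an entropy-accumulation bound via Marwah--Dupuis) without carrying either out, so the central lift from von Neumann entropy to smooth min-entropy — the whole content of the theorem — is missing from your argument. This is a genuine gap as a proof of the stated result. The cure is not new machinery: independence across sub-blocks is part of the hypothesis (it is what the superscript $m^{\otimes B_{\textnormal{sub}}}$ encodes), and once that is recognized your three-step skeleton (AEP over super-letters, Holevo single-letterization, $m=\sqrt{n}$ balancing) coincides with the paper's proof.
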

\begin{proof}
	Using~\cite[Definition 4 and Theorem 9]{Tomamichel2009FullyQuantum}, let $X^{m^{\otimes B_{\textnormal{sub}}}}$ and $Z^{m^{\otimes B_{\textnormal{sub}}}}$ represent two i.i.d quantum systems across $B_{\textnormal{sub}}$ sub-blocks and both $X^{m}$ and $Z^{m}$ are non-i.i.d. product states. Then,
\begin{align}
	&	\smH{X^{m^{\otimes B_{\textnormal{sub}}}}|Z^{m^{\otimes B_{\textnormal{sub}}}}}_{\rho^{m^{\otimes B_{\textnormal{sub}}}}}\nonumber\\
	&\geq
		B_{\textnormal{sub}} \cdot \vonH{X^m|Z^m}_{\rho^m}
	\nonumber\\
	& \quad - 4 \log
		\Bigl(
		 	\sqrt{2^{-\smH[]{{X^m}|{Z^m}}_{\rho^m}}}
			\!+\!\sqrt{2^{\sMH[]{{X^m}|{Z^m}}_{\rho^m}}}
		+1
		\Bigr)
	\nonumber\\
	&\qquad
		\times \sqrt{B_{\textnormal{sub}}} \sqrt{\log\frac{2}{\epsilon^2}} \label{eq:where_T9} \\
	&\geq
		B_{\textnormal{sub}} \cdot m \Bigl(
						R_\textnormal{code} -\chi\left(\mathcal{N}_{X \rightarrow Z}\right)
					\Bigr)
		- 4 \sqrt{B_{\textnormal{sub}}}\log \left( 2^{\frac{mR_\textnormal{code}}{2}}+2 \right)
		\nonumber\\
	&\quad
		\times \sqrt{\log\frac{2}{\epsilon^2}} \label{eq:where_single_letter_I}\\
	&\geq
		n 	\Bigl(
				R_\textnormal{code} - \chi \left(\mathcal{N}_{X \rightarrow Z}\right)
			 \Bigr)
		- O \left(n^{\frac{3}{4}}\right) \label{eq:where_Hmin},
\end{align}
\sloppy where~\eqref{eq:where_T9} follows by~\cite[Theorem 9]{Tomamichel2009FullyQuantum},~\eqref{eq:where_single_letter_I} follows by $\vonH{X^m|Z^m}_{\rho^m} =\vonH{X^m}_{\rho^m}-\I{X^m}{Z^m}_{\rho^m} \geq \sum_i^m \vonH{X_i}_{\rho_i} - \I{X_i}{Z_i}_{\rho_i} \geq mR_\textnormal{code}-\sum_i^m\mathbb{I}_\textnormal{acc}({X_i};{Z_i})_{\rho_i}\geq m(R_\textnormal{code}-\chi(\mathcal{N}_{X \rightarrow Z}))$ with the memoryless channel assumption, and $\sMH[]{X^m|Z^m}_{\rho^m} \leq \sMH[]{X^m}_{\rho^m} = mR_\textnormal{code}$.
\end{proof}

Theorem~\ref{thm:asymptotic_key_length} provides insight into how the key extraction performs asymptotically as a function of code rate. However, assuming the number of sub-blocks is large may be problematic in practice. When the message is segmented into multiple parts, the corresponding codeword length decreases, leading to degraded coding performance. Furthermore, the scaling in {\small$O(n^{\frac{3}{4}})$} is undesirable and likely sub-optimal. 

\subsection{Extractable Key Length In The Finite-Length  Regime}
\label{sec:finite_key_length}
 To address the aforementioned issues, we establish a non-asymptotic lower bound on the extractable key length in Theorem~\ref{thm:non_asymptotic_key_length}, analyzing the finite-length coding regime. The bound is derived from a general one-shot smooth min-entropy result, refined using a Taylor expansion of the R\'enyi divergence and expressed in terms of coding and channel parameters.

\begin{theorem}
\label{thm:non_asymptotic_key_length}
	Given a quantum channel $\calN_{X \rightarrow Z}$, let $\rho_{X^nZ^n}=\sum_{i \in M} p(i)\ket{i}\bra{i}_{X^n}\otimes \rho_{Z^n}^{(i)}$.
	Define $\sigma_{Z^n} \triangleq \otimes_{j=1}^n \left( \sum_{i=1}^M \frac{1}{M}\rho_j^{(i)} \right)=\otimes_{j=1}^n \tilde{\rho_j}$.
	For $1 > \epsilon > 0$ and a constant $C\in\mathbb{R}$, we have
\begin{align}
	&	\smH{X^n|Z^n}_{\rho_{X^nZ^n}}
		\geq
		n R_\textnormal{code} - n\chi{(\mathcal{N}_{X \rightarrow Z})} \nonumber\\
	& \qquad \qquad
		-\frac{\V{\rho_{X^nZ^n}}{I_{X^n} \otimes \sigma_{Z^n}}}{2 n \log e} 
		-\frac{C}{n^2}-n g(\epsilon) \nonumber,
\end{align}
where
\begin{align*}
		\V{\rho_{X^nZ^n}}{I_{X^n} \otimes \sigma_{Z^n}}
	&= 
		\sum_{i=1}^M \frac{1}{M} \sum_{j=1}^n \V{\rho_j^{(i)}}{\tilde{\rho}_j}
	\\
	& \quad
		+\sum_{i=1}^M \frac{1}{M} \left[\sum_{j=1}^n \D{\rho_j^{(i)}}{\tilde{\rho}_j}\right]^2
	\\
	& \quad
	-\left[\sum_{i=1}^M \frac{1}{M} \sum_{j=1}^n \D{\rho_j^{(i)}}{\tilde{\rho}_j}\right]^2,
\end{align*}
and $g(\epsilon) \triangleq -\log(1-\sqrt{1-{\epsilon}^2})$.
\end{theorem}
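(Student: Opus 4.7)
The plan is to derive the finite-length bound by going through the minimal quantum Rényi divergence as an intermediary, then Taylor-expanding it around $\alpha=1$, and finally exploiting the \ac{CQ} and product structure of $\rho_{X^nZ^n}$ and $\sigma_{Z^n}$ to explicitly identify the zeroth- and first-order terms.

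First, I would invoke a standard ``smooth-to-Rényi'' bound of the form $\smH{A|B}_{\rho} \geq \widetilde{\mathbb{H}}_\alpha^{\uparrow}(A|B)_{\rho} - \frac{g(\epsilon)}{\alpha-1}$ valid for $\alpha>1$ (see, e.g., the quantum \ac{AEP} toolkit of~\cite{Tomamichel2009FullyQuantum,Tomamichel2016QuantumInformation}), and then lower-bound $\widetilde{\mathbb{H}}_\alpha^{\uparrow}(X^n|Z^n)_\rho \geq -\trD{\rho_{X^nZ^n}}{I_{X^n}\otimes \sigma_{Z^n}}$ by plugging in the specific $\sigma_{Z^n}$ declared in the statement, rather than the supremizing one. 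This yields
\[
\smH{X^n|Z^n}_{\rho} \;\geq\; -\trD{\rho_{X^nZ^n}}{I_{X^n}\otimes\sigma_{Z^n}} - \frac{g(\epsilon)}{\alpha-1}.
\]
Setting $\alpha - 1 = 1/n$ turns the smoothing penalty into exactly $-ng(\epsilon)$, so all that remains is to control the divergence.

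Next, I would Taylor-expand the minimal Rényi divergence around $\alpha=1$,
\[
\trD{\rho_{X^nZ^n}}{I_{X^n}\otimes\sigma_{Z^n}} = \D{\rho_{X^nZ^n}}{I_{X^n}\otimes\sigma_{Z^n}} + \tfrac{\alpha-1}{2\log e}\V{\rho_{X^nZ^n}}{I_{X^n}\otimes\sigma_{Z^n}} + R(\alpha),
\]
where $R(\alpha) = O((\alpha-1)^2)$. With $\alpha-1=1/n$, the second-order term becomes $\V{\cdot}{\cdot}/(2n\log e)$ and the remainder contributes at most $C/n^2$ for some constant $C$ depending on third derivatives of $\trD{\cdot}{\cdot}$ on a neighborhood of $\alpha=1$; justifying a uniform such bound is the main technical hurdle and will be handled by standard Taylor estimates for the Rényi divergence on full-rank states.

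Then I would evaluate the zeroth-order term. Using $\rho_{X^nZ^n}=\sum_i \frac{1}{M}\ket{i}\bra{i}\otimes \rho_{Z^n}^{(i)}$ with $M=2^{nR_{\textnormal{code}}}$, the product structure $\rho_{Z^n}^{(i)}=\otimes_j \rho_j^{(i)}$ induced by the memoryless channel, and $\sigma_{Z^n}=\otimes_j \tilde\rho_j$, a direct computation gives
\[
\D{\rho_{X^nZ^n}}{I_{X^n}\otimes\sigma_{Z^n}} = -nR_{\textnormal{code}} + \sum_{j=1}^n \chi\{\tfrac{1}{M},\{\rho_j^{(i)}\}_i\} \leq -nR_{\textnormal{code}} + n\,\chi(\calN_{X\to Z}),
\]
where the last inequality simply bounds each per-channel-use Holevo quantity by the channel's Holevo information. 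Negating yields the first two terms of the claimed bound.

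Finally, I would obtain the variance decomposition. For a \ac{CQ} state with uniform classical marginal the variance splits as a conditional-plus-classical-variance identity,
\[
\V{\rho_{X^nZ^n}}{I_{X^n}\otimes\sigma_{Z^n}} = \sum_{i=1}^M \tfrac{1}{M}\V{\rho_{Z^n}^{(i)}}{\sigma_{Z^n}} + \mathrm{Var}_i\!\left[\D{\rho_{Z^n}^{(i)}}{\sigma_{Z^n}}\right],
\]
and then additivity of both $\mathbb{V}$ and $\mathbb{D}$ across the tensor-product factors $\otimes_j$ converts each term into the single-site sums $\sum_j \V{\rho_j^{(i)}}{\tilde\rho_j}$ and $\sum_j \D{\rho_j^{(i)}}{\tilde\rho_j}$ as stated. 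Assembling the four pieces—$nR_{\textnormal{code}} - n\chi(\calN_{X\to Z})$ from $-\mathbb{D}$, the $\mathbb{V}/(2n\log e)$ correction, the $C/n^2$ Taylor remainder, and the smoothing penalty $-ng(\epsilon)$—yields the theorem. The principal obstacle is obtaining a clean, finite, and state-independent constant $C$ in the Taylor remainder; apart from that, the steps are bookkeeping.
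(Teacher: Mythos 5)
Your proposal follows essentially the same route as the paper: the smooth-to-R\'enyi one-shot bound, dropping the supremum in $\widetilde{\mathbb{H}}_\alpha^{\uparrow}$ by plugging in the specific product state $\sigma_{Z^n}$, a Taylor expansion of $\widetilde{\mathbb{D}}_\alpha$ around $\alpha=1$ with the choice $\alpha = 1+\tfrac{1}{n}$, and then single-letterizing the $\mathbb{D}$ term via the Holevo bound and the $\mathbb{V}$ term via a total-variance decomposition plus additivity over tensor factors (which is exactly the content of the paper's Lemma~\ref{lem:single_letterize_D}). Your flagged concern about justifying a uniform constant $C$ in the Taylor remainder is fair, but the paper glosses over this point in the same way, simply positing $C$ as a constant.
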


\begin{proof}
Using~\cite[(6.96), (6.99)]{Tomamichel2016QuantumInformation}, we obtain the one-shot bound
\begin{align}
	\smH{X|Z}_{\rho_{XZ}}
	& \geq
			\rH{X|Z}_{\rho_{XZ}}-\frac{g(\epsilon)}{\alpha-1} \\
	& = 	
			\sup_{\sigma_Z} -\trD{\rho_{XZ}}{ I_{X}\otimes \sigma_{Z}}
			-\frac{g(\epsilon)}{\alpha-1} \label{eq:def_H} \\
	& \geq 	
			-\D{\rho_{XZ}}{ I_{X}\otimes \sigma_{Z}} \nonumber\\
	& \quad	
			-\frac{\left( \alpha-1 \right)}{2\log e}\V{\rho_{XZ}}{ I_{X}\otimes \sigma_{Z}}\nonumber\\
  	& \quad	-\left( \alpha-1 \right)^2 C -\frac{g(\epsilon)}{\alpha-1} \label{eq:second_order_D},
\end{align}
where~\eqref{eq:def_H} follows by one definition of $\mathbb{H}_\alpha$, and \eqref{eq:second_order_D} follows by the Taylor series expansion for $\widetilde{\mathbb{D}}_\alpha$. The bound over $n$ channel uses is derived by substituting the $\mathbb{D}$ and $\mathbb{V}$ terms in~\eqref{eq:second_order_D} with Lemma.~\ref{lem:single_letterize_D} and setting $\alpha = 1 + \frac{1}{n}$.
\end{proof}

\begin{lemma}
\label{lem:single_letterize_D}
	Let $\rho_{X^nZ^n}=\sum_{i=1}^M \frac{1}{M} \ket{i}\bra{i}_{X^n} \otimes \left( \otimes_{j=1}^n \rho^{(i)}_j \right)_{Z^n}$ which represents the joint state of the codeword $X^n$ and the eavesdropper’s observation $Z^n$, where $i$ denotes message indices and $j$ denotes the bit indices within codewords. Let $R_\textnormal{code}$ denote the coding rate, $\frac{\log M}{n}$. Define $\sigma_{Z^n} \triangleq \otimes_{j=1}^n \left( \sum_{i=1}^M \frac{1}{M}\rho_j^{(i)} \right)$. Then,
\begin{align}
	&\D{\rho_{X^nZ^n}}{I_{X^n} \otimes \sigma_{Z^n}} 
		\leq
		-n R_\textnormal{code} +n \; \chi\{ \frac{1}{M}, \rho^{(i)}\}, \nonumber\\
	&\V{\rho_{X^nZ^n}}{I_{X^n} \otimes \sigma_{Z^n}} 
		= 	\sum_{i=1}^M 	\frac{1}{M}
						\sum_{j=1}^n\V{\rho_j^{(i)}}{\sum_{\ell=1}^M \frac{1}{M} \rho_j^{(\ell)}} \nonumber\\
	& \qquad \qquad \qquad
			+ \sum_{i=1}^M \frac{1}{M}
						\left[
						\sum_{j=1}^n
						\D{\rho_{j}^{(i)}}
						{\sum_{\ell=1}^M \frac{1}{M} \rho_{j}^{(\ell)}}\right]^2
						\nonumber\\
	& \qquad \qquad \qquad
		 	-\left[
				\sum_{i=1}^M \frac{1}{M} \sum_{j=1}^n
				\D{\rho_{j}^{(i)}}
				  {\sum_{\ell=1}^M \frac{1}{M}\rho_{j}^{(\ell)}}
		\right]^2 \nonumber.	
	\end{align}
\end{lemma}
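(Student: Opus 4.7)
The plan is to exploit the classical--quantum (block-diagonal) structure of $\rho_{X^nZ^n}$ in the classical register $X^n$ to reduce both quantities to sums over codewords $i$ and coordinates $j$, and then to invoke additivity of $\mathbb{D}$ and $\mathbb{V}$ on product states. For the divergence, block-diagonality gives $\log\rho_{X^nZ^n} = \sum_i\ket{i}\bra{i}_{X^n}\otimes\bigl(\log\tfrac{1}{M}\cdot I_{Z^n} + \log\rho^{(i)}_{Z^n}\bigr)$ with $\rho^{(i)}=\otimes_j\rho_j^{(i)}$, while $\log(I_{X^n}\otimes\sigma_{Z^n}) = I_{X^n}\otimes\log\sigma_{Z^n}$. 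Pairing against $\rho_{X^nZ^n}$, tracing, and using additivity of relative entropy on product states yields
\[
\D{\rho_{X^nZ^n}}{I_{X^n}\otimes\sigma_{Z^n}} = -nR_\textnormal{code} + \sum_{j=1}^n\tfrac{1}{M}\sum_{i=1}^M\D{\rho_j^{(i)}}{\tilde\rho_j}.
\]
The inner sum at each coordinate is the per-coordinate ensemble Holevo $\chi\{\tfrac{1}{M},\rho_j^{(i)}\}$, which is upper-bounded by the channel Holevo $\chi(\mathcal{N}_{X\to Z})$, delivering the stated inequality.

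For the variance, set $T\triangleq\log\rho_{X^nZ^n}-\log(I_{X^n}\otimes\sigma_{Z^n})$. By the same block-diagonal argument, $T=\sum_i\ket{i}\bra{i}_{X^n}\otimes T^{(i)}$ with $T^{(i)}=(-\log M)I_{Z^n}+\log\rho^{(i)}-\log\sigma_{Z^n}$, so $T^2=\sum_i\ket{i}\bra{i}\otimes(T^{(i)})^2$. Expanding $(T^{(i)})^2$ and using the identity $\tr{\rho^{(i)}(\log\rho^{(i)}-\log\sigma_{Z^n})^2} = \V{\rho^{(i)}}{\sigma_{Z^n}} + \D{\rho^{(i)}}{\sigma_{Z^n}}^2$ gives
\[
\tr{\rho_{X^nZ^n}T^2} = \tfrac{1}{M}\sum_i\V{\rho^{(i)}}{\sigma_{Z^n}} + \tfrac{1}{M}\sum_i\bigl(-\log M+\D{\rho^{(i)}}{\sigma_{Z^n}}\bigr)^2.
\]
Subtracting $\D{\rho_{X^nZ^n}}{I_{X^n}\otimes\sigma_{Z^n}}^2 = \bigl(-\log M + \tfrac{1}{M}\sum_i\D{\rho^{(i)}}{\sigma_{Z^n}}\bigr)^2$ causes the $-\log M$ shifts to cancel by uniformity of $\tfrac{1}{M}$, leaving the empirical variance of $\D{\rho^{(i)}}{\sigma_{Z^n}}$ over $i$ plus $\tfrac{1}{M}\sum_i\V{\rho^{(i)}}{\sigma_{Z^n}}$.

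The final step is to invoke additivity on product states a second time: $\D{\rho^{(i)}}{\sigma_{Z^n}}=\sum_j\D{\rho_j^{(i)}}{\tilde\rho_j}$ and $\V{\rho^{(i)}}{\sigma_{Z^n}}=\sum_j\V{\rho_j^{(i)}}{\tilde\rho_j}$, which upon substitution produces exactly the three-term formula in the lemma. The step I expect to be the main obstacle is justifying additivity of $\mathbb{V}$ on product states: one must verify that for $j\neq k$ the cross-coordinate terms $\tr{\rho^{(i)}S_jS_k}$, where $S_j$ is the log-ratio operator supported on position $j$, factor as $\D{\rho_j^{(i)}}{\tilde\rho_j}\D{\rho_k^{(i)}}{\tilde\rho_k}$ and are therefore absorbed by subtracting $\D{\rho^{(i)}}{\sigma_{Z^n}}^2$. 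The CQ structure on $X^n$ trivializes the outer register, but this inner bookkeeping on $Z^n$ is the only place non-commutativity really bites.
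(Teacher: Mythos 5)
Your proof is correct and follows essentially the same route as the paper's: decompose by message index using the classical--quantum block-diagonal structure, reduce to per-coordinate quantities via additivity of $\mathbb{D}$ and $\mathbb{V}$ on product states, identify the per-coordinate ensemble Holevo quantity, and obtain the variance formula as second moment minus squared mean. The only difference is bookkeeping: where the paper extracts the $\frac{1}{M}$ prefactor through two auxiliary factoring lemmas (Lemmas~\ref{lem:factor_out_D} and~\ref{lem:factor_out_V}) and cancels the $\log^2 M$ and $-2\log M\,\mathbb{D}$ terms by explicit expansion, you absorb this into the block-diagonal matrix logarithm and the shift-invariance of the variance, which is an equivalent (and somewhat streamlined) presentation of the same argument.
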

\begin{proof}

\begin{align}
  	&\D{\rho_{X^nZ^n}}{ I_{X^n} \otimes \sigma_{Z^n}} \nonumber\\
  	&= 	\tr{\rho_{X^nZ^n} \biggl[\log(\rho_{X^nZ^n})-\log\left( I_{A^n}\otimes\sigma_{Z^n} \right)   \biggr]}
  	\nonumber \\
  	&= 	\sum_{i=1}^M
  		\operatorname{tr}
  			\biggl(
  			\left(
  				\frac{1}{M} \otimes_{j=1}^n \rho^{(i)}_j 	\right) \nonumber\\
  	&	\quad
  			\times \left[
  					\log \left(  \frac{1}{M} \otimes_{j=1}^n \rho^{(i)}_j 	\right)
  		 		-
  		 			\log \left(  \otimes_{j=1}^n \left( \sum_{\ell=1}^M \frac{1}{M}\rho_j^{(\ell)} \right) \right)
  		\right]
  		\biggr)
  	\nonumber\\
  	&=	\sum_{i=1}^M \D{\frac{1}{M} \otimes_{j=1}^n \rho^{(i)}_j}{\otimes_{j=1}^n \left( \sum_{\ell=1}^M \frac{1}{M}\rho_j^{(\ell)} \right)}
  	\nonumber\\
  	&= 	\sum_{i=1}^M
  			\biggl( 
  			-\frac{1}{M} \log M
  	\nonumber\\
  	&	\quad \qquad  \qquad
  		+	\frac{1}{M}
  			\D{\otimes_{j=1}^n \rho^{(i)}_j}{\otimes_{j=1}^n \left( \sum_{\ell=1}^M \frac{1}{M}\rho_j^{(\ell)}\right)}
  			\biggr) 
  	\label{eq:apply_lem_factor_out_D}
  	\\
  	&= -\log M + \sum_{i=1}^M \frac{1}{M} \sum_{j=1}^n
  		\operatorname{tr} \biggl( \rho_j^{(i)} \log \rho_j^{(i)}
  	\nonumber\\
  	&	\quad \qquad  \qquad  \qquad  \qquad  \qquad  \qquad
  		-\rho_j^{(i)} \log \left( \sum_{\ell=1}^M \frac{1}{M}\rho_j^{(\ell)} \right) 
  		\biggr)
  	\nonumber\\
  	&= 	-nR_\textnormal{code} +\sum_{j=1}^n \chi \left\{p_i(i)=\frac{1}{M},\rho_j^{(i)} \right\} \nonumber\\
  	&\leq -n R_\textnormal{code} +n \; \chi{(\mathcal{N}_{X \rightarrow Z})} \nonumber,
\end{align}
where \eqref{eq:apply_lem_factor_out_D} follows by Lemma~\ref{lem:factor_out_D} in~\cite{Su2026QuantumMemoryFreeQuantum}. For the detailed derivation of $\V{\rho_{X^nZ^n}}{I_{X^n} \otimes \sigma_{Z^n}}$, refer to~\cite{Su2026QuantumMemoryFreeQuantum}, Appendix~\ref{sec:proof_lem_single_letterize_D}.
\end{proof}

Other code characteristics are required for further bounding $\smH{X^n|Z^n}_{\rho_{X^nZ^n}}$. As an example, we specialize Theorem~\ref{thm:non_asymptotic_key_length} to binary linear codes in Corollary~\ref{cor:special_code_general}, for which codewords yield an average uniform distribution over $\{0,1\}$ in each bit position.

\begin{corollary}
\label{cor:special_code_general}
	Given a linear code $\mathcal{C}$, assume that 
$\mathbb{E}_{\mathcal{C}}[ \rho_j^{(i)} ] \cmnt{=} \tilde{\rho}$
for every bit position $j$. Then, 
	\begin{align}
		&\smH{X^n|Z^n}_{\rho_{X^nZ^n}} 
 			\geq
 				 nR_\textnormal{code}
 			  	-n \chi{(\mathcal{N}_{X \rightarrow Z})}
 			  	-n g(\epsilon)
 		\nonumber \\
 		& \quad
  		      	-\frac{1}{2 n \log e}
  				\Biggl( 
					\V{\rho^{0}}{\tilde{\rho}} n
					\nonumber \\
 		& \qquad \qquad \qquad \quad
					+ \Bigl( \V{\rho^{1}}{\tilde{\rho}}-\V{\rho^{0}}{\tilde{\rho}} \Bigr)
					\mathbb{E}_{\mathcal{C}}\Bigl(\operatorname{wt}(x)\Bigr)
				\Biggr)
		\nonumber \\
 		& \quad
  		      	-\frac{1}{2 n \log e}
  		      	\Bigl(
					\D{\rho^1}{\tilde{\rho}}-\D{\rho^0}{\tilde{\rho}}
				\Bigr)^2
					\operatorname{var}\Bigl(\operatorname{wt}(x)\Bigr)
  		\nonumber \\
 		& \quad
  				-\frac{C}{n^2},
	\end{align} where $g(\epsilon) \triangleq -\log(1-\sqrt{1-{\epsilon}^2})$, $\operatorname{var}$ is the variance function, and $\operatorname{wt}(x)$ denotes the weight of codeword $x \in \mathcal{C}$.
\end{corollary}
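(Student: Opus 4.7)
The plan is to deduce the corollary from Theorem~\ref{thm:non_asymptotic_key_length} by substituting the linear-code structure into each term of the variance expression and re-indexing the sums by codeword weight. The starting point is the inequality
\[
\smH{X^n|Z^n}_{\rho_{X^nZ^n}} \geq n R_\textnormal{code} - n\chi{(\calN_{X \rightarrow Z})} - \frac{\V{\rho_{X^nZ^n}}{I_{X^n} \otimes \sigma_{Z^n}}}{2 n \log e} - \frac{C}{n^2} - n g(\epsilon),
\]
so I only need to re-express $\V{\rho_{X^nZ^n}}{I_{X^n} \otimes \sigma_{Z^n}}$ in the form stated in the corollary. The hypothesis $\mathbb{E}_{\calC}[\rho_j^{(i)}] = \tilde{\rho}$ for every $j$ means the per-position average state $\tilde{\rho}_j$ appearing in Lemma~\ref{lem:single_letterize_D} is position-independent and equal to $\tilde{\rho}=\tfrac{1}{2}\rho^0+\tfrac{1}{2}\rho^1$, so $\sigma_{Z^n}=\tilde{\rho}^{\otimes n}$.

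Next I would exploit the binary structure: for codeword $x^{(i)}\in\calC$, the $j$-th single-letter state $\rho_j^{(i)}$ equals $\rho^{0}$ if $x_j^{(i)}=0$ and $\rho^{1}$ if $x_j^{(i)}=1$. Summing over $j$ and recognizing $\wt{x^{(i)}}$ as the number of ones gives the two identities
\begin{align*}
\sum_{j=1}^n \V{\rho_j^{(i)}}{\tilde{\rho}} &= n\V{\rho^0}{\tilde{\rho}} + \bigl(\V{\rho^1}{\tilde{\rho}}-\V{\rho^0}{\tilde{\rho}}\bigr)\wt{x^{(i)}},\\
\sum_{j=1}^n \D{\rho_j^{(i)}}{\tilde{\rho}} &= n\D{\rho^0}{\tilde{\rho}} + \bigl(\D{\rho^1}{\tilde{\rho}}-\D{\rho^0}{\tilde{\rho}}\bigr)\wt{x^{(i)}}.
\end{align*}
Averaging the first identity over the uniform distribution on $\calC$ directly yields the $\V{\rho^{0}}{\tilde{\rho}}\,n + (\V{\rho^{1}}{\tilde{\rho}}-\V{\rho^{0}}{\tilde{\rho}})\,\mathbb{E}_{\calC}[\wt{x}]$ line of the corollary.

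For the remaining contribution, I would set $A \triangleq n\D{\rho^0}{\tilde{\rho}}$ and $B \triangleq \D{\rho^1}{\tilde{\rho}}-\D{\rho^0}{\tilde{\rho}}$ so that $\sum_j \D{\rho_j^{(i)}}{\tilde{\rho}} = A + B\wt{x^{(i)}}$. Then the last two terms of the variance expression in Lemma~\ref{lem:single_letterize_D} combine into
\[
\mathbb{E}_{\calC}\!\bigl[(A+B\wt{x})^2\bigr] - \bigl(\mathbb{E}_{\calC}[A+B\wt{x}]\bigr)^2 = B^2 \operatorname{var}\bigl(\wt{x}\bigr),
\]
which is exactly the $(\D{\rho^1}{\tilde{\rho}}-\D{\rho^0}{\tilde{\rho}})^2\operatorname{var}(\wt{x})$ contribution. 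Assembling the three pieces and dividing by $2n\log e$ as in Theorem~\ref{thm:non_asymptotic_key_length} yields the claim.

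There is no genuine obstacle here beyond careful bookkeeping: the computation is a direct specialization and the cancellation that turns the second-moment difference into a variance is the one substantive simplification. The only subtlety worth flagging is ensuring that $\sigma_{Z^n}$ in the original theorem can indeed be taken as the tensor product of the per-position averages, which is exactly the choice made in Lemma~\ref{lem:single_letterize_D} and which is justified here by the hypothesis that $\mathbb{E}_{\calC}[\rho_j^{(i)}]$ does not depend on $j$.
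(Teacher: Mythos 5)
Your proposal is correct and follows essentially the same route as the paper: specialize the variance expression from Theorem~\ref{thm:non_asymptotic_key_length} (via Lemma~\ref{lem:single_letterize_D}) using the fact that $\sum_{j}\D{\rho_j^{(i)}}{\tilde{\rho}}$ and $\sum_{j}\V{\rho_j^{(i)}}{\tilde{\rho}}$ are affine in $\operatorname{wt}(x^{(i)})$, so the second-moment-minus-squared-mean terms collapse to $\bigl(\D{\rho^1}{\tilde{\rho}}-\D{\rho^0}{\tilde{\rho}}\bigr)^2\operatorname{var}\bigl(\operatorname{wt}(x)\bigr)$. The affine-in-weight identity you introduce with $A$ and $B$ is exactly the identity the paper invokes to justify its key step, so there is no substantive difference in the argument.
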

\begin{proof}
Specializing the $\mathbb{V}$ term in Theorem~\ref{thm:non_asymptotic_key_length}, we have
\begin{align}
	&\V{\rho_{X^nZ^n}}{I_{X^n} \otimes \sigma_{Z^n}}
	\nonumber\\
	&= 	
		n \mathbb{E}_{\mathcal{C}} \left[  \V{\rho^{(i)}}{\tilde{\rho}} \right]
		+\Bigl(\D{\rho^1}{\tilde{\rho}}-\D{\rho^0}{\tilde{\rho}}\Bigr)^2
			\operatorname{var}\Bigl(\operatorname{wt}(x)\Bigr) \label{eq:wt_general}
	\\
	&= 
			\V{\rho^{0}}{\tilde{\rho}} n
		+\biggl(
			\V{\rho^{1}}{\tilde{\rho}}-\V{\rho^{0}}{\tilde{\rho}}
		 \biggr)
				\mathbb{E}_{\mathcal{C}}\Bigl({\operatorname{wt}(x)\Bigr)}
	\nonumber\\
	& 	\qquad
		+
		\Bigl(
			\D{\rho^1}{\tilde{\rho}}-\D{\rho^0}{\tilde{\rho}}
		\Bigr)^2
			\operatorname{var}\Bigl(\operatorname{wt}(x)\Bigr) \nonumber,
\end{align}
where~\eqref{eq:wt_general} follows by $\sum_{j=1}^n \D{\rho_j^{(i)}}{\tilde{\rho}}=\D{\rho^0}{\tilde{\rho}}n+\Bigl(\D{\rho^1}{\tilde{\rho}} - \D{\rho^0}{\tilde{\rho}}\Bigr) \operatorname{wt}(x_i)$. 
\end{proof}

The lower bound on the extractable key length in Corollary~\ref{cor:special_code_general} can be decomposed into three parts. The first part, comprising the code rate, the Holevo information term, and the $g$ function, grows linearly with $n$. The second part, involving the $\mathbb{V}$ terms, remains constant provided that the expected code weight is linear in $n$ and appropriate code design is employed to control the difference in relative entropy variance. The third part, which depends on the $\mathbb{D}$ terms and the variance of the code weight, is more challenging to characterize. To provide an approximate estimate, consider a $(128,64)$ polar code (equivalently, a Reed–Muller code) analyzed in~\cite{Yao2024DeterministicAlgorithm}. In this case, we have $\frac{\operatorname{var}(\operatorname{wt}(x))}{2 n \log e} \approx 0.067$ as the penalty factor. For large $n$, we expect the first part to dominate, making the extractable key length sufficiently large.

Corollary~\ref{cor:special_code_unitary} provides a further simplification by exploiting the property of the proposed protocol that each bit in the codewords is encoded using unitary operators.

\begin{corollary}
\label{cor:special_code_unitary}
	Given a unitary encoder, and a linear code $\mathcal{C}$, assume that 
$\mathbb{E}_{\mathcal{C}}[ \rho_j^{(i)} ] = \frac{1}{2}\rho^0 + \frac{1}{2}\rho^1 = \tilde{\rho}$
for every bit position $j$. Then, it follows that:
	\begin{align}
	&\smH{X^n|Z^n}_{\rho_{X^nZ^n}} 
		\geq
 			nR_\textnormal{code}
 			-n \chi{(\mathcal{N}_{X \rightarrow Z})}
 			-n g(\epsilon)
  			\nonumber\\
  	& \qquad \qquad \qquad \qquad \qquad
  			-\frac{1}{2 \log e}
  				\V{\rho^{0}}{\tilde{\rho}}
  			-\frac{C}{n^2},
	\end{align} where $g(\epsilon) \triangleq -\log(1-\sqrt{1-{\epsilon}^2})$, $\operatorname{var}$ is the variance function, and $\operatorname{wt}(x)$ denotes the weight of codeword $x \in \mathcal{C}$.
\end{corollary}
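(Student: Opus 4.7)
The plan is to derive Corollary~\ref{cor:special_code_unitary} from Corollary~\ref{cor:special_code_general} by exploiting a symmetry induced by the unitary encoder. In the proposed protocol, the bit ``1'' is encoded by applying $U_Y = \ket{0}\bra{1}-\ket{1}\bra{0}$ to Alice's qubit while ``0'' leaves it unchanged, so $\rho^{0}$ and $\rho^{1}$ differ only by the propagation of a unitary through the backward channel and Eve's attack. I would first argue that this propagation yields a unitary $V$ on Eve's system with $\rho^{1} = V\rho^{0}V^\dagger$; the crucial additional property is that $U_Y^2 = -I$, which, lifted through the Stinespring dilation of the composite map from Alice's qubit to Eve's register, implies that $V^2$ acts as a scalar on Eve's relevant subspace and hence $V\tilde{\rho}V^\dagger = \frac{1}{2}(\rho^{1}+V^2\rho^{0}V^{\dagger 2}) = \tilde{\rho}$.

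Given $V\tilde{\rho}V^\dagger = \tilde{\rho}$, the unitary invariance of the quantum relative entropy and of the relative entropy variance delivers the two identities
\begin{align*}
\D{\rho^{1}}{\tilde{\rho}} &= \D{V\rho^{0}V^\dagger}{V\tilde{\rho}V^\dagger} = \D{\rho^{0}}{\tilde{\rho}},\\
\V{\rho^{1}}{\tilde{\rho}} &= \V{V\rho^{0}V^\dagger}{V\tilde{\rho}V^\dagger} = \V{\rho^{0}}{\tilde{\rho}}.
\end{align*}
Substituting these into the inequality of Corollary~\ref{cor:special_code_general}, the weight-expectation term $\bigl(\V{\rho^{1}}{\tilde{\rho}}-\V{\rho^{0}}{\tilde{\rho}}\bigr)\mathbb{E}_{\mathcal{C}}(\operatorname{wt}(x))$ and the weight-variance term $\bigl(\D{\rho^{1}}{\tilde{\rho}}-\D{\rho^{0}}{\tilde{\rho}}\bigr)^2\operatorname{var}(\operatorname{wt}(x))$ both vanish. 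The only surviving variance contribution is $\V{\rho^{0}}{\tilde{\rho}}\cdot n$, and the prefactor $\frac{1}{2n\log e}$ collapses it to the $n$-independent quantity $\frac{1}{2\log e}\V{\rho^{0}}{\tilde{\rho}}$, reproducing exactly the stated bound.

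I expect the main obstacle to be the first step: justifying the lift of $U_Y$ to a unitary $V$ on Eve's system under which $\tilde{\rho}$ is invariant. This is where the Pauli-like structure $U_Y^2 = -I$ is essential, and care is needed in tracking Eve's side-information accumulated during both the forward attack in Step~2 and the backward tap in Step~5, since $V$ must act jointly on both pieces of her register. Once this symmetry is in place, the remainder of the argument is a purely algebraic collapse of Corollary~\ref{cor:special_code_general} via unitary invariance of $\mathbb{D}$ and $\mathbb{V}$; no new quantum-information inequalities beyond those already invoked in the proof of Corollary~\ref{cor:special_code_general} are required.
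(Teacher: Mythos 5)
Your algebraic core---invariance of $\tilde{\rho}$ under the encoding unitary, unitary invariance of $\D{\cdot}{\cdot}$ and $\V{\cdot}{\cdot}$, and the resulting cancellation of the weight-expectation term and the weight-variance term in Corollary~\ref{cor:special_code_general}, leaving $\frac{1}{2\log e}\V{\rho^0}{\tilde{\rho}}$---is exactly the paper's proof. The gap is in the step you yourself flag as the main obstacle. The paper never lifts $U_Y$ to a unitary $V$ on Eve's register, and no such lift exists in general: a unitary pre-composed with a CPTP map does not become a unitary post-composed with it. If Eve's conditional states are $\rho^0=\mathcal{E}(\omega)$ and $\rho^1=\mathcal{E}(U_Y\omega U_Y^\dagger)$, where $\mathcal{E}$ is the effective map from Alice's qubit to Eve's register (forward-attack ancilla plus backward tap), then $\rho^0$ and $\rho^1$ can have different spectra---e.g.\ one pure and one mixed for a suitably non-covariant $\mathcal{E}$---in which case no unitary $V$ with $\rho^1=V\rho^0V^\dagger$ exists at all. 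Stinespring dilation does not rescue this: the dilation isometry $W$ satisfies no intertwining relation of the form $WU_Y=(V\otimes I)W$ unless the channel is covariant with respect to $U_Y$, which is an extra assumption, not a consequence of $U_Y^2=-I$. The Pauli-like property $U_Y^2=-I$ is irrelevant to the \emph{existence} of $V$; it only helps once a unitary relation is already in hand.

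What the paper actually does is read the ``unitary encoder'' hypothesis as the modeling assumption that Eve's conditional states themselves satisfy $\rho^1=U_Y\rho^0U_Y^\dagger$---i.e.\ the covariance you are trying to derive is taken as given. Under that reading, $U_Y^2=-I$ yields $U_Y\rho^1U_Y^\dagger=\rho^0$, hence $U_Y\tilde{\rho}U_Y^\dagger=\tilde{\rho}$, and then $\D{\rho^1}{\tilde{\rho}}=\D{\rho^0}{\tilde{\rho}}$ and $\V{\rho^1}{\tilde{\rho}}=\V{\rho^0}{\tilde{\rho}}$ follow from the two invariance identities (which the paper proves inline and you may legitimately cite as standard). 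From there, your substitution into Corollary~\ref{cor:special_code_general} is verbatim the paper's conclusion. So the fix is organizational rather than computational: either state the relation $\rho^1=U_Y\rho^0U_Y^\dagger$ (or, more generally, covariance $\mathcal{E}(U_Y\cdot U_Y^\dagger)=V\mathcal{E}(\cdot)V^\dagger$ with $V^2\propto I$) as an explicit hypothesis, or restrict to attack models where it provably holds; what you cannot do is claim it follows from the Stinespring dilation of an arbitrary channel.
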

\begin{proof}
Given $U_Y \triangleq |0\rangle\langle 1|-|1\rangle\langle 0|$ and $\tilde{\rho} \triangleq \frac{1}{2}\left(\rho^0+\rho^1\right)$, we note that
\begin{align*}
	U_Y \; \widetilde{\rho} \; U_Y^{\dagger}
	=\frac{1}{2}\left(U_Y \rho^0 U_Y^{\dagger}+U_Y \rho^1 U_Y^{\dagger}\right)
	=\frac{1}{2}\left(\rho^1+\rho^0\right)=\widetilde{\rho}.
\end{align*}
We first prove the invariance property of relative entropy under unitary. Let $U$ be a unitary, $\rho , \sigma \in \mathcal{D}_\circ(\mathcal{H})$, $\rho' = U \rho U^\dagger$ and $\sigma' = U \sigma U^\dagger$, we have
$\log\rho'=U(\log\rho)U^\dagger$ and $\log\sigma'=U(\log\sigma)U^\dagger$.

Hence,
\begin{align*}
	\mathbb{D}(\rho'|| \sigma')
	&=\operatorname{Tr}\big[U\rho U^\dagger\big(U\log\rho U^\dagger - U\log\sigma U^\dagger\big)\big] \\
 	&=\operatorname{Tr}\big[\rho(\log\rho-\log\sigma)\big]\\
 	&=\mathbb{D}(\rho|| \sigma).
\end{align*}

We then prove the invariance property of relative entropy variance under unitary.
Let $\Delta_{\rho,\sigma}\triangleq \log\rho-\log\sigma$, we have
$\Delta_{\rho',\sigma'}=U(\log\rho-\log\sigma)U^\dagger=U\Delta_{\rho,\sigma}U^\dagger$, and
\begin{align*}
	\big(\Delta_{\rho',\sigma'}-D(\rho'\|\sigma')\big)^2
	=
	U\big(\Delta_{\rho,\sigma}-D(\rho\|\sigma)\big)^2U^\dagger.
\end{align*}
	
Then,
\begin{align*}
	\mathbb V(\rho'\|\sigma')
	&=\operatorname{Tr}\!\Big[\rho'\big(\Delta_{\rho',\sigma'}-D(\rho'\|\sigma')\big)^2\Big]\\
	&=\operatorname{Tr}\!\Big[U\rho U^\dagger \; U(\Delta_{\rho,\sigma}-D(\rho\|\sigma))^2U^\dagger\Big]\\
	&=\mathbb V(\rho\|\sigma).
\end{align*}

Finally, we can show
\begin{align*}
		\D{\rho^{1}}{\tilde{\rho}}
	&=	\D{U_Y\rho^{0}U_Y^\dagger}{U_Y\tilde{\rho}U_Y^\dagger}=\D{\rho^{0}}{\tilde{\rho}}, \\\V{\rho^{1}}{\tilde{\rho}}
	&=	\V{U_Y\rho^{0}U_Y^\dagger}{U_Y\tilde{\rho}U_Y^\dagger}=\V{\rho^{0}}{\tilde{\rho}}.
\end{align*}

The $\V{\rho_{X^nZ^n}}{I_{X^n} \otimes \sigma_{Z^n}}$ term is then replaced with $\V{\rho^{0}}{\tilde{\rho}} n$.
\end{proof}

\section*{Acknowledgment}{Parts of this document have received assistance from generative AI tools to aid in the composition; the authors have reviewed and edited the content as needed and take full responsibility for it.}

\clearpage
\IEEEtriggeratref{22}
\bibliographystyle{IEEEtran}
\bibliography{references.bib}

\clearpage
\appendices
\onecolumn
\section{Supporting Lemmas}
\begin{lemma}
\label{lem:factor_out_D}
For $\rho_1,\sigma_1 \in \mathcal{D}_\circ(\mathcal{H}_1)$, $\rho_2, \sigma_2 \in \mathcal{D}_\circ(\mathcal{H}_2)$, and $a\in \mathbb{R}_+$, we have $\D{a \rho_1 \otimes \rho_2}{\sigma_1 \otimes \sigma_2} = a \log a + a \D{\rho_1 \otimes \rho_2}{\sigma_1 \otimes \sigma_2}$.
\end{lemma}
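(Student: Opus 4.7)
The plan is to proceed directly from the standard definition of the quantum relative entropy, $\D{\rho}{\sigma}=\tr{\rho(\log\rho-\log\sigma)}$, and exploit the fact that multiplying a positive operator by a scalar simply shifts its logarithm by a multiple of the identity on the support. Concretely, I would set $\tau\triangleq\rho_1\otimes\rho_2$ and $\omega\triangleq\sigma_1\otimes\sigma_2$, observe that $\tr{\tau}=1$ since $\rho_1,\rho_2$ are normalised, and then work out the scaled relative entropy $\D{a\tau}{\omega}$.

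The first step is to compute $\log(a\tau)$. On the support of $\tau$, the operator $a\tau$ has eigenvalues $a\lambda_i$ where $\{\lambda_i\}$ are the nonzero eigenvalues of $\tau$, so $\log(a\tau)=(\log a)\,\Pi_{\tau}+\log(\tau)$, where $\Pi_{\tau}$ is the projector onto $\mathrm{supp}(\tau)$. Substituting into the definition,
\begin{align*}
\D{a\tau}{\omega}
&=\tr{a\tau\bigl[(\log a)\Pi_{\tau}+\log\tau-\log\omega\bigr]}\\
&=a(\log a)\tr{\tau\Pi_{\tau}}+a\,\tr{\tau(\log\tau-\log\omega)}.
\end{align*}
The second step is to simplify each term: since $\tau$ is supported on $\Pi_{\tau}$, $\tr{\tau\Pi_{\tau}}=\tr{\tau}=1$, and the second trace is exactly $\D{\rho_1\otimes\rho_2}{\sigma_1\otimes\sigma_2}$. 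Combining gives the claim $\D{a\rho_1\otimes\rho_2}{\sigma_1\otimes\sigma_2}=a\log a+a\D{\rho_1\otimes\rho_2}{\sigma_1\otimes\sigma_2}$.

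There is essentially no hard step; the only care needed is the handling of the support so that $\log(a\tau)$ is interpreted via the functional calculus on $\mathrm{supp}(\tau)$, which is the standard convention underlying the relative entropy. The usual finiteness hypothesis $\rho_1\otimes\rho_2\ll\sigma_1\otimes\sigma_2$ (inherited from the right-hand side being finite) ensures no issues arise from $\log\omega$. The result is a mild scalar-rescaling identity for the Umegaki relative entropy, and it is exactly what is needed in~\eqref{eq:apply_lem_factor_out_D} to pull the $1/M$ prefactor out of the relative entropy and convert it into the $-\log M=-nR_{\textnormal{code}}$ term that appears in the subsequent single-letterisation.
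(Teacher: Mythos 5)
Your proof is correct and follows essentially the same route as the paper's: expand the definition of the relative entropy, split $\log(a\,\rho_1\otimes\rho_2)$ into the scalar part plus $\log(\rho_1\otimes\rho_2)$, and use $\tr{\rho_1\otimes\rho_2}=1$ to isolate the $a\log a$ term. The only difference is cosmetic — you write the scalar's logarithm as $(\log a)\Pi_\tau$ on the support, while the paper uses $\log(aI_1\otimes I_2)$ on the whole space; both are equivalent under the trace against the state.
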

\begin{proof}
	\begin{align*}
		& \D{a \rho_1 \otimes \rho_2}{\sigma_1 \otimes \sigma_2} \\
		&= \tr{	\left( a \rho_1 \otimes \rho_2 \right)
			   	\left(
			   		 \log \left( a  \rho_1 	 \otimes \rho_2 	\right)
			   		-\log \left( 	\sigma_1 \otimes \sigma_2 	\right)
			   	\right)} \\
		&= a \; \operatorname{tr}\{	\left( \rho_1 \otimes \rho_2 \right)
			   	(
			   		 \log \left( a I_1 \otimes I_2 \right)
			   		+\log \left( \rho_1 \otimes \rho_2 \right) \nonumber\\
		& \quad	   	-\log \left( \sigma_1 \otimes \sigma_2 \right)
			   	)\} \\
		&= a \; \tr{	\left( \rho_1 \otimes \rho_2 \right)} \log a
			+ a \; \operatorname{tr}\{	\left( \rho_1 \otimes \rho_2 \right)
			   	(
			   		 \log \left( \rho_1 \otimes \rho_2 \right) \nonumber \\
		&\quad	 -\log \left( \sigma_1 \otimes \sigma_2 \right)
			   	)\} \\
		&= a \log a + a \D{\rho_1 \otimes \rho_2}{\sigma_1 \otimes \sigma_2}
	\end{align*}	
\end{proof}

\begin{lemma}
\label{lem:factor_out_V}
For $\rho,\sigma \in \mathcal{D}_\circ(\mathcal{H})$, and $a\in \mathbb{R}_+$, we have $\tr{a \rho \left( \log (a \rho) - \log\sigma\right)^2}=a\;\tr{\rho(\log\rho-\log\sigma)^2}+a \log^2 a +2 \left(a\log a \right) \D{\rho}{\sigma}$.
\end{lemma}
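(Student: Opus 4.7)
The plan is to exploit the scalar-times-operator identity $\log(a\rho) = (\log a)\,I + \log\rho$ on the support of $\rho$ (which is all that matters when the operator is subsequently multiplied by $\rho$), and then expand the square to reduce everything to quantities already appearing in the statement.

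First I would write $\log(a\rho)-\log\sigma = (\log a)\,I + (\log\rho - \log\sigma)$ and then algebraically expand the square, using that $(\log a)I$ commutes with every operator, to get
\begin{align*}
\bigl(\log(a\rho)-\log\sigma\bigr)^2
 &= (\log a)^2\,I + 2(\log a)\bigl(\log\rho-\log\sigma\bigr) \\
 &\quad + \bigl(\log\rho-\log\sigma\bigr)^2.
\end{align*}
Next I would multiply on the left by $a\rho$, take the trace, and pull the scalar $a$ and the scalar powers of $\log a$ out of each term.

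Finally I would identify the three resulting traces: $\tr{\rho \cdot I} = \tr{\rho} = 1$ (since $\rho\in\mathcal D_\circ(\mathcal H)$), $\tr{\rho(\log\rho-\log\sigma)} = \D{\rho}{\sigma}$ by definition of relative entropy, and $\tr{\rho(\log\rho-\log\sigma)^2}$ which is left as-is. Collecting coefficients yields exactly the right-hand side of the lemma: $a\log^2 a$ from the first term, $2(a\log a)\D{\rho}{\sigma}$ from the cross term, and $a\,\tr{\rho(\log\rho-\log\sigma)^2}$ from the last term.

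There is no real obstacle here; the only care point is justifying $\log(a\rho) = (\log a)\,I + \log\rho$ when $\rho$ is not full-rank. I would handle this by restricting to the support projector $P$ of $\rho$ and using the convention $0\log 0 = 0$, so that $\log(a\rho) = (\log a)P + \log\rho$ on the support and the terms involving $I-P$ contribute zero after multiplication by $\rho$; this mirrors the standard argument already implicit in Lemma~\ref{lem:factor_out_D} and leaves the final expression unchanged.
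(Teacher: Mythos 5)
Your proposal is correct and follows essentially the same route as the paper: both hinge on the identity $\log(a\rho)=(\log a)I+\log\rho$, expand the square using the fact that the scalar term commutes with everything, and identify the three resulting traces via $\tr{\rho}=1$ and $\tr{\rho(\log\rho-\log\sigma)}=\D{\rho}{\sigma}$. Your added care about the support projector when $\rho$ is not full rank is a sound refinement that the paper's proof silently glosses over, but it does not change the argument.
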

\begin{proof}
\begin{align*}
		& \operatorname{tr} \Bigl( a\rho 	\left(\log (a \rho)-\log \sigma \right)
									\left(\log (a \rho)-\log \sigma \right)
							\Bigr) \\
		& =	a \; \operatorname{tr} \Bigl( \rho (
							\log (	a \rho) \log (a \rho)
									+\log\sigma\log\sigma
									-\log\sigma \log(a \rho) \nonumber\\
		&\quad						-\log(a \rho) \log\sigma
							) \Bigr)\\
		& =	a \; \operatorname{tr} \Bigl( \rho \Bigl(
							(\log a + \log \rho) (\log a + \log \rho)
									+\log\sigma\log\sigma \nonumber\\
		& \qquad \qquad \quad
									-\log\sigma (\log a + \log \rho)
									-(\log a + \log \rho) \log\sigma
							\Bigr) \Bigr)\\
		& =	a \; \operatorname{tr} \Bigl( \rho (
								\log \rho \log \rho + \log \sigma \log \sigma
								-\log\sigma\log\rho -\log\rho\log\sigma \nonumber\\
		& \qquad \qquad \quad
							+\log a \log a + 2\log a \log \rho-2\log a \log \sigma
							) \Bigr)\\
		& =	a \; \tr{\rho(\log \rho-\log \sigma)^2}
				+a \log ^2 a + 2  \left( a \log a \right) \D{\rho}{\sigma}
\end{align*}	
\end{proof}

\section{Proof of Lemma~\ref{lem:single_letterize_D}}
\label{sec:proof_lem_single_letterize_D}

\begin{align}
  	&\V{\rho_{X^nZ^n}}{I_{X^n} \otimes \sigma_{Z^n}} \nonumber\\
  	&= \tr{\rho_{X^nZ^n} \Bigl( \log (\rho_{X^nZ^n}) - \log(I_{X^n} \otimes \sigma_{Z^n}) \Bigr)^2}
  	-\Bigl(( \D{\rho_{X^nZ^n}}{I_{X^n} \otimes \sigma_{Z^n}} \Bigr)^2 \label{eq:V_two_terms},\end{align}
  	and we calculate the first term of~\eqref{eq:V_two_terms} below.
\begin{align}
	&	\tr{\rho_{X^nZ^n} \Bigl( \log (\rho_{X^nZ^n}) - \log(I_{X^n} \otimes \sigma_{Z^n}) \Bigr)^2}
	\\
	&= \operatorname{tr} \Biggl( \sum_{i=1}^M\frac{1}{M}\ket{i}\bra{i}\otimes \left( \otimes_{j=1}^n \rho_j^{(i)} \right) \nonumber\\
	& \qquad \qquad
		\Biggl(
			\log
				\biggl(
				\sum_{i=1}^M\frac{1}{M}\ket{i}\bra{i}\otimes \left( \otimes_{j=1}^n \rho_j^{(i)} \right)
				\biggr)
			\biggr) 
		-	\log
			\biggl(
				\otimes_{j=1}^n \biggl( \sum_{\ell=1}^M\frac{1}{M}\rho_j^{(\ell)} \biggr) \biggr)
			\biggr)^2
		\Biggr)
	\\
	&= 	\sum_{i=1}^M \operatorname{tr}
		\Biggl(
		\left( \frac{1}{M} \otimes_{j=1}^n \rho_j^{(i)} \right)
		\Biggl(
			\log
			\biggl(
				\frac{1}{M} \otimes_{j=1}^n \rho_j^{(i)}
			\biggr) 
		-	\log
			\biggl(
				\otimes_{j=1}^n \sum_{\ell=1}^M \frac{1}{M} \rho_j^{(\ell)}
			\biggr)
		\Biggr)^2
		\Biggr) 
	\\
	&= \sum_{i=1}^M
		\Biggl[ 
			\frac{1}{M}\operatorname{tr}
			\Biggl(
				\left(  \otimes_{j=1}^n \rho_j^{(i)} \right)
				\biggl(
					\log
						\Bigl( \otimes_{j=1}^n \rho_j^{(i)} \Bigr) 
					-\log
						\Bigl(
							\otimes_{j=1}^n \sum_{\ell=1}^M \frac{1}{M} \rho_j^{(\ell)}
						\Bigr)
				\biggr)^2 
			\Biggr)	\nonumber\\
	&\qquad \qquad
			+\frac{1}{M}\log^2M
			-\frac{2}{M}\log M
			\D{\otimes_{j=1}^n \rho_j^{(i)}}
			{\otimes_{j=1}^n \sum_{\ell=1}^M \frac{1}{M} \rho_j^{(\ell)}}
		\Biggr] \label{eq:apply_lem_factor_out_V}
	\\
	&= 					\sum_{i=1}^M 	
						\frac{1}{M}
						\Bigg[ \left(
						\sum_{j=1}^n\V{\rho_j^{(i)}}{\sum_{\ell=1}^M \frac{1}{M} \rho_j^{(\ell)}} \right) 
		+\D{\otimes_{j=1}^n \rho_j^{(i)}}
		   {\otimes_{j=1}^n \sum_{\ell=1}^M \frac{1}{M} \rho_j^{(\ell)}}^2 
	\nonumber\\
	& \qquad \qquad \quad
		+\log^2M - 2 \log M \D{\otimes_{j=1}^n \rho_j^{(i)}}
					 {\otimes_{j=1}^n \sum_{\ell=1}^M \frac{1}{M} \rho_j^{(\ell)}}\Bigg],
\end{align}
where \eqref{eq:apply_lem_factor_out_V} follows by Lemma~\ref{lem:factor_out_V}. And we conclude the calculation of $\mathbb{V}$ by summing the second term of~\eqref{eq:V_two_terms}.
\begin{align}
  	&	\V{\rho_{X^nZ^n}}{I_{X^n} \otimes \sigma_{Z^n}} \\
  	&=
  		\tr{\rho_{X^nZ^n} \left( \log\rho_{X^nZ^n} - \log(I_{X^n} \otimes \sigma_{Z^n}) \right)^2} 
  		-\left( \D{\rho_{X^nZ^n}}{I_{X^n} \otimes \sigma_{Z^n}} \right)^2 \\
  	&=
  		\sum_{i=1}^M
  		\frac{1}{M}
		\Bigg[
			\left(
				\sum_{j=1}^n\V{\rho_j^{(i)}}{\sum_{\ell=1}^M \frac{1}{M} \rho_j^{(\ell)}}
			\right)
		+\D{\otimes_{j=1}^n \rho_j^{(i)}}
					   {\otimes_{j=1}^n \sum_{\ell=1}^M \frac{1}{M} \rho_j^{(\ell)}}^2  
	\nonumber\\
	& \qquad \qquad \quad
					+\log^2M - 2 \log M \D{\otimes_{j=1}^n \rho_j^{(i)}}
						{\otimes_{j=1}^n \sum_{\ell=1}^M \frac{1}{M} \rho_j^{(\ell)}}
		\Bigg]
	\nonumber\\
	& \qquad \;
		-
		\left(
			-\log M + \sum_{i=1}^M \frac{1}{M} \sum_{j=1}^n \D{\rho_{j}^{(i)}}{\left( \sum_{\ell=1}^M \frac{1}{M}\rho_{j}^{(\ell)} \right)}
		\right)^2 \\
	&=
		\sum_{i=1}^M
		\frac{1}{M}
		\Bigg[
			\left(
				\sum_{j=1}^n\V{\rho_j^{(i)}}{\sum_{\ell=1}^M \frac{1}{M} \rho_j^{(\ell)}}
			\right) 
			+\D{\otimes_{j=1}^n \rho_j^{(i)}}
						{\otimes_{j=1}^n \sum_{\ell=1}^M \frac{1}{M} \rho_j^{(\ell)}}^2
		\Bigg]
	\nonumber\\
	& \qquad \;
		-
		\left[
				\sum_{i=}^M \frac{1}{M} \sum_{j=1}^n
				\D{\rho_{j}^{(i)}}
				{\left( \sum_{\ell=1}^M \frac{1}{M}\rho_{j}^{(\ell)} \right)}
		\right]^2 \\
	&=					
		\sum_{i=1}^M
		\frac{1}{M}
		\sum_{j=1}^n\V{\rho_j^{(i)}}{\sum_{\ell=1}^M \frac{1}{M} \rho_j^{(\ell)}}
	   +\sum_{i=1}^M \frac{1}{M}
		\left[
			\sum_{j=1}^n
			\D{\rho_{j}^{(i)}}
			  {\sum_{\ell=1}^M \frac{1}{M} \rho_{j}^{(\ell)}}
		\right]^2
	\nonumber\\
	&\quad \quad -\left[
				\sum_{i=1}^M \frac{1}{M} \sum_{j=1}^n
				\D{\rho_{j}^{(i)}}
				{\left( \sum_{\ell=1}^M \frac{1}{M}\rho_{j}^{(\ell)} \right)}
		\right]^2
\end{align}

\end{document}